\newtheorem{theorem}{Theorem}[section]
\newtheorem{lemma}[theorem]{Lemma}
\newtheorem{proposition}[theorem]{Proposition}
\newcommand{\E}{\mathbb{E}^{\mathbb{Q}}}
\begin{document}

\begin{frontmatter}
%% Title, authors and addresses
%% use the tnoteref command within \title for footnotes;
%% use the tnotetext command for theassociated footnote;
%% use the fnref command within \author or \address for footnotes;
%% use the fntext command for theassociated footnote;
%% use the corref command within \author for corresponding author footnotes;
%% use the cortext command for theassociated footnote;
%% use the ead command for the email address,
%% and the form \ead[url] for the home page:
%% \title{Title\tnoteref{label1}}
%% \tnotetext[label1]{}
%% \author{Name\corref{cor1}\fnref{label2}}
%% \ead{email address}
%% \ead[url]{home page}
%% \fntext[label2]{}
%% \cortext[cor1]{}
%% \address{Address\fnref{label3}}
%% \fntext[label3]{}

\title{\large{\bf Analytic properties of American option prices under a modified Black--Scholes equation with spatial fractional derivatives}}

%% use optional labels to link authors explicitly to addresses:
%% \author[label1,label2]{}
%% \address[label1]{}
%% \address[label2]{}

\author[jiangnan]{Wenting Chen}
\author[uow]{Kai Du}
\author[uow]{Xinzi Qiu}
\address[jiangnan]{School of Business, Jiangnan University, Wuxi, Jiangsu Province, 214100, China.}
\address[uow]{School of Mathematics and Applied Statistics, University of Wollongong NSW 2522, Australia.}

\begin{abstract}
This paper investigates analytic properties of American option prices under the finite moment log-stable (FMLS) model.
Under this model the price of American options is characterised by the free boundary problem of a fractional partial differential equation (FPDE) system.
Using the technique of approximation we prove that the American put price under the FMLS model is convex with respect the underlying price, and specify the impact of the tail index on option prices.
\end{abstract}

\begin{keyword}
American option, FMLS model, fractional partial differential equation, convexity of option prices, impact of fat tails 
\end{keyword}

\end{frontmatter}

%% \linenumbers

%% main text
\section{Introduction}

The celebrated Black--Scholes (BS) model is based on the assumption that the underlying asset price follows a geometric Brownian motion \cite{black1973pricing}. 
However, it is well documented in the literature that the BS model usually underestimates the probability of the appearance of jumps or large movements of stock prices over small time steps~\cite{CW};
for example, when analyzing the S\&P 500 data, a ``leptokurtic distribution'' is observed, which has a higher peak and two heavier tails than those of the normal distribution.
Many efforts have been made to develop mathematical models that capable of capturing the leptokurtic feature observed in financial market data.
A feasible approach is to adopt a L\'evy process extending Brownian motion for the description of the price, such as the Press model~\cite{press1967compound}, Merton's jump diffusion model~\cite{Mer}, and so on.
This idea allows us to model large price changes due to sudden exogenous events or information, and can explain some systematic empirical biases with respect to the BS model.

Among all the L\'evy process models, the finite moment log-stable (FMLS) process model, proposed by Carr and Wu~\cite{CW}, can not only successfully capture the high-frequency empirical probability distribution of the S\&P 500 data, 
but also fit simultaneously volatility smirks at different maturities. 
Most importantly, in contrast to many other L\'evy process models, 
the FMLS model ensures the finiteness of all moments of the underlying index level and the existence of an equivalent martingale measure. 
This paper is carried out under the framework of the FMLS model. 
The extension of our approach and results to other L\'evy processes models (e.g., KoBol and CGMY mentioned in~\cite{CCN}) is quite promising.

Mathematically, the key tool to characterize the non-locality induced by the pure jumps under the FMLS model is the fractional partial differential equation (FPDE), which is a subset of the class of pseudo-differential equations. 
In the new FPDE associated with the FMLS model, the second-order spatial derivative involved in the standard BS equation is replaced by an $\alpha$-order spatial derivative with $\alpha\in(1, 2]$. 
In comparison to the derivative of integer order, the fractional-order derivative at a point not only involves properties of the function at that particular point, but also the information of the function in a certain subset of the entire domain of definition. 
This global dependency of the fractional derivative has added an additional degree of difficulty when either analytical methods or the numerical simulations are attempted. 

Under the FMLS model, many techniques have been developed to compute option values, as summarized in \cite{schmelzle2010option}.
Cartea and del-Castillo-Negrete~\cite{CCN} considered the pricing of barrier options under the FMLS model purely numerically, by using a finite difference method.
Recently, Chen et al.~\cite{CXZ1} derived closed-form analytical solutions for European-style options under the same model, which is one of the main tools in our approach.
In contrast, pricing American options is more complicated even in the classical BS model~\cite{huang1996pricing,ju1998pricing,longstaff2001valuing,Zhu}, 
with challenge mainly stemming from the nonlinearity originated from the inherent characteristic that an American option can be exercised at any time during its lifespan. 
This additional right casts the American option pricing problem into a free boundary problem, which is highly nonlinear and far more difficult to deal with.
Several numerical scheme for pricing American options were proposed in \cite{CXZ,chen2014penalty,chen2015finite} under the FMLS model.

The purpose of this paper is to investigate analytic properties of American options under the FMLS model.
This topic has been extensively studied in the literature under the standard BS model. 
For example, Ekstr\"om \cite{EE1} proved that the price of an American put is convex with respect the underlying price  (see also El Karoui et al. \cite{El} and Hobson \cite{Hob} for related results),
and convexity of the optimal exercise boundary for an American put was obtained by Ekstr\"om \cite{EE2} and Chen et al.~\cite{CCJZ}.
Those properties can help us understand the asymptotic
behavior of the optimal exercise boundary near expiry \cite{CCJZ}. 
The first main result of this paper (see Theorem~\ref{convex} below) is an extension of the above results, which shows that the American put price under the FMLS model is also convex with respect the underlying price.
Regarding the difference between the FMLS model and the BS model, our second main result (see Theorem~\ref{mt1} below) proves that the price of an American put is monotonically decreasing with respect to the index $\alpha$ in the FMLS model.
This phenomenon was first observed in~\cite{CXZ1, CXZ} from their numerical examples, for which we provide a rigorous analytic proof in this paper.
As the index $\alpha$ measures the fatness of the tail of the return distribution and the FMLS model reduces to the BS model when $\alpha\uparrow 2$, our second result actually reveals that the BS model tends to underprice put options, 
and the pricing bias becomes larger as the tail becomes fatter, which corresponds to smaller $\alpha$ values.

This paper is organised as follows. 
In Section \ref{sec2} we introduce the FMLS model and present the mathematical formulation of pricing American options under this model.  
In Section \ref{sec3}  we approximate an American put by a sequence of Bermudan options.
A Bermudan option is an American-style option with a restricted set of possible exercise dates, being a combination of American and European options.
The approximation result provides us with another main tool to prove our main theorems in the next two sections.
In Section \ref{sec4} we obtain the convexity results for American options under the FMLS models,
and in Section \ref{sec5} we discuss the influence of the tail index on the option prices. 
Concluding remarks are given in the last section.

\section{American options under FMLS models}\label{sec2}

An American option is an instrument which gives the owner the right to buy/sell one asset of a certain stock at a fixed price at any time prior to some pre-determined expiration time $T$.
In this paper we consider a market consisting of a risk-free asset with constant return $r\ge 0$, and a risky asset with risk-neutral price process $(S_t)_{t\ge 0}$.
Under the risk neutral measure $\mathbb{Q}$, the price of an American option is given as a function of the current time $t\leq T$ and the current underlying price $S_t = S>0$ by
\begin{equation}\label{price}
	V(S,t) = \sup_{\gamma \in [t,T]} \mathbb{E}^{\mathbb{Q}} 
	\big[e^{-	r(\gamma-t)} g(S_\gamma) \big],
\end{equation}
where $g(\cdot)$ is the \emph{pay-off} function and $\gamma$ is a stopping time with respect to the given filtration. 
As the American call on an asset with no dividends usually has the same value of the corresponding European call~\cite{schreve2004stochastic}, in this paper we focus on the American put for which the pay-off function has the form
\begin{equation}\label{put}
g(S) = (K-S)^+ = \max\{K-S, 0\}
\end{equation}
with a given \emph{strike price} $K>0$.
 
The FMLS model assumes that, under the risk neutral measure $\mathbb{Q}$,  the log value of the underlying, i.e., $x_t = \ln S_t$, follows a stochastic differential equation of the maximally skewed LS process:
	\begin{equation}\label{FMLS SDE}
		d x_t = (r-\nu) dt +\sigma dL_t^{\alpha,-1},
	\end{equation}
where $\nu=\sigma^\alpha\sec\frac{\alpha\pi}{2}$ is a convexity adjustment \cite{CXZ}.
In another words, the FMLS model adopts the L\'evy process $L_t^{\alpha,-1}$ instead of the Brownian motion in the standard BS model.

In general, $L_t^{\alpha,\beta}$ denotes the L\'evy-Stable (LS) process with $\alpha\in(0,2]$ being the tail index describing the deviation from Brownian motion and $\beta\in[-1,1]$ being the skew parameter. 
The tail index $\alpha$ is usually restricted to $(1,2]$ such that the underlying return has the support on the whole real line~\cite{CW}.
In the maximally skewed LS process, namely $\beta=-1$, the random variable $x_t$ is maximally skewed to the left, meaning that the right tail of the distribution is fast decaying so that exponential moments exit. 
When $\alpha=2$ the FMLS model becomes the BS model, while when $\alpha<2$ the situation is much more complicated since fractional partial differential equations (FPDE) and non-local operators are involved. 
Market observations show that $\alpha$ is usually around $1.4$ \cite{CW}. 

With the change of variables $x=\ln S$, the American put price $V$ as a function of $x$ and $t$ (we will write $V=V(x,t;\alpha)$ if no confusion occurs) satisfies the free boundary problem 
	\begin{equation}\label{FMLS}
		\frac{\partial V}{\partial t} + (r+\frac12\nu)\frac{\partial V}{\partial x} - \frac12\nu D_x^\alpha V - rV = 0 \quad\mbox{ if }x>x(t;\alpha),
\end{equation}
where $\{x=x(t;\alpha)\}$ is the logarithm of the optimal exercise boundary, and $D_x^\alpha$ is the fractional differential operator of order $\alpha$, interpreted in the Caputo sense, namely for a $C^2$ function $f$,
	\begin{equation*}
		D_x^\alpha f(x) := \frac{1}{\Gamma(2-\alpha)}\int_{-\infty}^x \frac{f''(y)}{(x-y)^{\alpha-1}}dy, \quad 1<\alpha<2.
	\end{equation*}
From the above definition one can observe that the fractional differentiation is non-local, and it involves the option price in the exercise region $(-\infty,x(t)]$.	
Along with \eqref{FMLS}, we have the far-field boundary condition and the terminal condition for the American put price:
\begin{equation}\label{terminal}
	\begin{aligned}
	\lim_{x\to\infty} V(x,t;\alpha)&=0, \\
	V(x,T;\alpha)&=(K-e^x)^+,
	\end{aligned}
	\end{equation}
In order to properly close the FPDE system, we impose the following two moving boundary conditions:
\begin{equation}\label{bound}
	\begin{aligned}
		V(x,t;\alpha) = K - e^x &\qquad\mbox{ if }x=x(t,\alpha)\\
		\frac{\partial V}{\partial x}(x,t;\alpha) = -e^x &\qquad\mbox{ if }x=x(t,\alpha).
	\end{aligned}
	\end{equation}
	
	The above system \eqref{FMLS}--\eqref{bound} is the free boundary problem satisfied by the American put price $V(t,x;\alpha)$ and the optimal exercise boundary $x(t,\alpha)$ under the FMLS model. 
One can see that when $\alpha\to2$ it approaches to the classical BS equation.

\section{Approximating an American option with Bermuda options}	\label{sec3}

The \emph{Bermudan} option is an option where the buyer has the right to exercise at a set (always discretely spaced) number of times,
which is intermediate between a European option and an American option.

Let $g=g(\cdot)$ be a pay-off function (later on, $g(S)$ is chosen to be $(K-S)^+$ with a strike price $K$ in our cases). 
With predetermined times $0=t_0<t_1<\cdots<t_M=T$ 
the price of a Bermudan option is
	\begin{equation}\label{Bermudan option}
	B(S,0) := \sup_{\gamma\in\{t_0,t_1,\cdots,t_M\}} \E [e^{-r\gamma} g(S_\gamma)],
	\end{equation}
where $\gamma$ is a stopping time.
Using the dynamic programming principle, the family of prices can be calculated as follows:
\begin{itemize}
\item[(1)] The price $B(S,t_M)$ at $t_M=T$ is $g(S)$;
\item[(2)] Given the price $B(\cdot,t_i)$, for $1\leq i\leq M$, the price at $t_{i-1}$ is
	\begin{equation*}
	B(S,t_{i-1})=\max\left\{\E_{s,t_{i-1}}[e^{-r(t_i-t_{i-1})}B(S_{t_i},t_i)],\ g(S)\right\}.
	\end{equation*}
\end{itemize}

From the above induction, one can see that the Bermudan option is a kind of link from European options to American options. 
In fact, the price $B(S,t_{m-1})$ of a Bermudan option at $t=t_{m-1}$ can be calculated inductively as the maximum of $g(S)$ and the price of a European option with expiry $t_m$ and contract function $B(S,t_m)$.
On the other hand, we can show in the next lemma that the price of an American option can be approximated by a sequence of Bermudan option. 
To this end, we let
\begin{equation*}
		A_N := \{ 0,T2^{-N}, 2T2^{-N}, 3T2^{-N},\cdots,T\},
	\end{equation*}
and
	\begin{equation}\label{Bermudan sequence}
	B_N(S,0)=\sup_{\gamma\in A_N} \E [e^{-r\gamma}g(S_\gamma)].
	\end{equation}

\begin{lemma}\label{lec3}
Assume that $g$ is a non-negative function. As the possible exercise times of the Bermudan option gets denser, the Bermudan option price converges to the corresponding American option price under  the FMLS  models, i.e.
	\begin{equation*}
		B_N(S,0) \to V(S,0),
	\end{equation*}
	as $N\to\infty$, where $V(S,0)=\sup_{0\leq\gamma\leq T}\mathbb{E}[e^{-r\gamma}g(S_\gamma)]$ is the American option price. 
\end{lemma}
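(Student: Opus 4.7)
The plan is to split the proof into the two inequalities $\limsup_N B_N(S,0)\le V(S,0)$ and $\liminf_N B_N(S,0)\ge V(S,0)$. The first direction is immediate: since $A_N\subset[0,T]$, every stopping time with values in $A_N$ is in particular a stopping time with values in $[0,T]$, so $B_N(S,0)\le V(S,0)$ for every $N$. Since $A_N\subset A_{N+1}$, the sequence $B_N(S,0)$ is moreover non-decreasing, so the limit exists.

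For the reverse inequality I would use a standard approximation by rounding up. Fix $\varepsilon>0$ and choose a stopping time $\gamma^*\in[0,T]$ realising the supremum up to $\varepsilon$, i.e.\ $\E[e^{-r\gamma^*}g(S_{\gamma^*})]\ge V(S,0)-\varepsilon$. Define
$$\gamma_N:=\min\{s\in A_N:s\ge\gamma^*\},$$
which takes values in $A_N$, satisfies $\gamma_N\downarrow\gamma^*$ as $N\to\infty$, and is a stopping time with respect to the natural (right-continuous) filtration of $L^{\alpha,-1}$. By definition of $B_N$ in \eqref{Bermudan sequence},
$$B_N(S,0)\ge\E\bigl[e^{-r\gamma_N}g(S_{\gamma_N})\bigr].$$

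The heart of the proof is then to pass to the limit on the right-hand side. Since $L^{\alpha,-1}$ admits a càdlàg modification, the map $t\mapsto S_t$ is right-continuous, hence $S_{\gamma_N}\to S_{\gamma^*}$ almost surely as $\gamma_N\downarrow\gamma^*$. Continuity of the put payoff $g(S)=(K-S)^+$ then yields $g(S_{\gamma_N})\to g(S_{\gamma^*})$ a.s., and since $0\le g(S)\le K$, the dominated convergence theorem gives
$$\E\bigl[e^{-r\gamma_N}g(S_{\gamma_N})\bigr]\longrightarrow\E\bigl[e^{-r\gamma^*}g(S_{\gamma^*})\bigr]\ge V(S,0)-\varepsilon.$$
Combining this with the Bermudan bound yields $\liminf_N B_N(S,0)\ge V(S,0)-\varepsilon$, and letting $\varepsilon\downarrow 0$ closes the argument.

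The main obstacle is precisely this limiting step, which hinges on three ingredients: right-continuity of the sample paths of the FMLS process, continuity of the payoff, and a uniformly integrable dominating function. For the American put all three hold trivially because the payoff is bounded by the strike $K$; for an unbounded non-negative $g$ one would need an additional integrability hypothesis, which is comfortable to obtain under the FMLS specification since every moment of $S_t$ is finite. I would emphasise the bounded case, since this is what the later convexity and monotonicity arguments actually invoke.
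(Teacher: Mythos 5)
Your proposal is correct and follows essentially the same route as the paper: bound $B_N\le V$ from the definition, then round an (approximately) optimal stopping time up to $A_N$ and pass to the limit by dominated convergence. You are in fact slightly more careful than the paper, which invokes dominated convergence for a general non-negative $g$ without identifying the dominating function or the right-continuity of the paths that you make explicit.
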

	
\begin{proof}
Under the standard Black-Scholes model, this convergence was proved in \cite{EE1}.
Here, our proof follows from there by using the dominated convergence theorem. 
Given a stopping time $\gamma\in[0,T]$, let 
	\begin{equation*}
		\gamma_N : =\inf\{ \tau\geq\gamma : \tau\in A_N\}.
	\end{equation*}
One can see that $\gamma_N\in A_N$ for any $N$, $\gamma_N\to\gamma$ almost surely as $N\to\infty$.
By the dominated convergence theorem,
	\begin{equation*}
	|\E [e^{-r\gamma}g(S_\gamma)]-\E [e^{-r\gamma_N}g(S_{\gamma_N})]| \leq \E|e^{-r\gamma}g(S_\gamma)-e^{-r\gamma_N}g(S_{\gamma_N})|\to 0,
	\end{equation*}
as $N\to\infty$. Since $g\geq0$, the above inequality implies that 
	\begin{equation*}
		{\lim\inf}_{N\to\infty}B_N(S,0)\geq V(S,0).
	\end{equation*}
On the other hand, by definitions \eqref{Bermudan option} of Bermudan option prices, $B_N(S,0)\leq V(S,0)$ for all $N$. Therefore, the lemma is proved. 
\end{proof}

\section{Convexity of option prices}\label{sec4}

In this section we aim to prove that the American put price $V(S,t)$ is convex with respect to the current underlying $S$ under the FMLS model.
Our starting point is the explicit closed-form analytical solution for European options under the FMLS model obtained by the recent work \cite{CXZ1}. 
With the pay-off function $g(\cdot)$ the price of a European option is given by \[V_E(S,t) = \E [e^{-r(T-t)} g(S_T)].\]
Define
	\begin{equation*}
		\tau = \nu(T-t),
	\end{equation*}
where for simplifying the computation, we assign $\nu=-\frac12\sigma^\alpha\sec\frac{\alpha\pi}{2}$ with a negative sign to that of the convexity adjustment in \eqref{FMLS SDE}. This transformation changes the backward problem into a forward problem. 
Also define $\gamma = r/\nu$ as the relative interest rate of the volatility with fractional order $\alpha$ to the risk-free interest rate $r$. 
One has the explicit analytical expression
	\begin{equation}\label{VF}
		V_E(x,\tau;\alpha) = \int_{-\infty}^{\infty} e^{-\gamma\tau}\Pi(x-(1-\gamma)\tau-\tau^{\frac{1}{\alpha}}m)f_{\alpha,0}(|m|)dm,
	\end{equation} 
where  $x=\ln S$ and
\begin{equation}\label{put2}
\Pi(x)=(K-e^x)^+
\end{equation} 
is the pay-off function corresponding to \eqref{put}, and $f_{\alpha,0}$ is the L\'evy stable density given by	
\begin{equation*}
		f_{\alpha,0}\left(\frac{|z|}{\tau^{1 / \alpha}}\right) = \frac{1}{\alpha}H^{1,1}_{2,2}\left[\frac{|z|}{\tau^{1 / \alpha}}\left|
		\begin{array}{cc}
		(1-\frac{1}{\alpha}, \frac{1}{\alpha}) & (\frac12,\frac12) \\
		(0,1) & (\frac12, \frac12)
		\end{array}
		\right.\right],
	\end{equation*} 
where $H(x)$ is the Fox function \cite{CXZ1}, whose general forms are defined by
	\begin{equation}\label{Hfunc}
	\begin{split}
		& H^{m,n}_{p,q}\left[ z \left|\begin{array}{cccc}(a_1,A_1) & (a_2,A_2) & \cdots & (a_p,A_p) \\ (b_1,B_1) & (b_2,B_2) & \cdots & (b_q,B_q) 	\end{array}\right.\right] \\
		=&\ \frac{1}{2\pi i}\int_L \frac{\left(\Pi_{j=1}^m\Gamma(b_j+B_js)\right)\left(\Pi_{j=1}^n\Gamma(1-a_j-A_js)\right)}{\left(\Pi_{j=m+1}^q\Gamma(1-b_j-B_js)\right)\left(\Pi_{j=n+1}^p\Gamma(a_j+A_js)\right)} z^{-s}\,ds
	\end{split}
	\end{equation}
where $L$ is a certain contour separating the poles of the two factors in the numerator. 
For detailed computation via Fourier transform and technique issues we refer the reader to \cite{CXZ1}.
But we would like to point out that the dependency of $x$ in the integrand of \eqref{VF} is only on the pay-off function $\Pi$.

Once having the above closed form analytical solutions for European options, we can derive their convexities with respect to various parameters. Then via the Bermudan option, the convexity can pass over to American options correspondingly. 

\begin{lemma}\label{lec1}
With the pay-off function $g$ given in \eqref{put}, the European option price $V_E$ is convex in the current underlying $S$.
\end{lemma}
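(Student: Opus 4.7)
The key observation is that although the payoff $\Pi(x) = (K - e^x)^+$ is \emph{not} convex in $x$, it becomes the ordinary put payoff $(K - S)^+$ when expressed in terms of $S$, which \emph{is} convex in $S$. So the plan is to exploit the explicit integral representation \eqref{VF} and pass pointwise convexity through the integral.

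\textbf{Step 1.} I would first rewrite \eqref{VF} in the $S$ variable via $x = \ln S$. Since
\begin{equation*}
\Pi\big(x - (1-\gamma)\tau - \tau^{1/\alpha}m\big) = \big(K - c(m)\,S\big)^+,
\end{equation*}
where $c(m) := e^{-(1-\gamma)\tau - \tau^{1/\alpha}m} > 0$ is a positive constant depending only on $m$, $\tau$, $\gamma$, $\alpha$ (and not on $S$), the pricing formula becomes
\begin{equation*}
V_E(S,\tau;\alpha) = \int_{-\infty}^{\infty} e^{-\gamma\tau}\big(K - c(m)\,S\big)^+ f_{\alpha,0}(|m|)\,dm.
\end{equation*}

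\textbf{Step 2.} For each fixed $m \in \mathbb{R}$, the map $S \mapsto (K - c(m)S)^+$ is piecewise linear in $S$ with a single kink at $S = K/c(m)$ and is therefore convex on $(0,\infty)$. The multiplicative factors $e^{-\gamma\tau}$ and $f_{\alpha,0}(|m|)$ are non-negative, so the entire integrand is convex in $S$ for every $m$.

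\textbf{Step 3.} For any $S_1, S_2 > 0$ and $\lambda \in [0,1]$, pointwise convexity yields
\begin{equation*}
\big(K - c(m)(\lambda S_1 + (1-\lambda)S_2)\big)^+ \leq \lambda (K - c(m)S_1)^+ + (1-\lambda)(K - c(m)S_2)^+.
\end{equation*}
Multiplying both sides by $e^{-\gamma\tau} f_{\alpha,0}(|m|) \geq 0$ and integrating in $m$ over $\mathbb{R}$ produces
\begin{equation*}
V_E\big(\lambda S_1 + (1-\lambda) S_2\big) \leq \lambda V_E(S_1) + (1-\lambda) V_E(S_2),
\end{equation*}
which is precisely the desired convexity in $S$.

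I do not anticipate a serious obstacle: the only minor technical check is absolute convergence of \eqref{VF} for each $S > 0$ so that the integrated inequality is not vacuous, and this is guaranteed by the construction of the representation in \cite{CXZ1} together with the tail decay of the L\'evy stable density $f_{\alpha,0}$. The essentially trivial pointwise convexity of a scaled put payoff in the underlying does all the work, which is exactly what makes this lemma, combined with the Bermudan approximation of Section~\ref{sec3}, powerful enough to propagate convexity to American options later on.
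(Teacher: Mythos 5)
Your proposal is correct and follows essentially the same route as the paper: both rewrite the integrand of \eqref{VF} as $(K - C(m)S)^+$ with a positive, $S$-independent factor $C(m)$, observe that this is convex in $S$ for each $m$, and pass the convexity through the integral against the non-negative weight $e^{-\gamma\tau}f_{\alpha,0}(|m|)$. Your explicit write-out of the convexity inequality and the remark on absolute convergence are minor elaborations of the paper's (terser) argument, not a different approach.
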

	
\begin{proof}
Recalling that $S=e^x$, the function $\Pi(x)=g(S)=(K-S)^+$ is convex with respect to $S$, where $g(\cdot)$ is defined in \eqref{put}.
In the integrand of the above analytical expression \eqref{VF} for $V_E$ under the FMLS model, the only part involving the underlying $S$ is 
	\begin{equation*}
		\Pi(x-(1-\beta)\tau-\tau^{\frac{1}{\alpha}}m) = (K-C(m)S)^+,	
	\end{equation*}
where $C(m)=e^{-(1-\beta)\tau-\tau^{\frac{1}{\alpha}}m}$ is a positive factor for all $m\in\mathbb{R}$. 
Therefore, by integration $V_E$ is convex in $S$.  
\end{proof}	

Next, let's pass the above convexity over to the Bermudan option price $B(s,t)$ defined in \eqref{Bermudan option}. 
Since the Bermudan option price is inductively constructed by the maximum of European option prices, the following lemma is immediate.
\begin{lemma}\label{lec2}
With the pay-off function given in \eqref{put}, the Bermudan option price $B(S,t)$ is convex in the underlying $S$ for any fixed $t$. 
\end{lemma}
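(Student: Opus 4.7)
The plan is to prove convexity by backward induction on the grid points $t_M = T, t_{M-1}, \ldots, t_0 = 0$, using the dynamic programming recursion that defines $B(\cdot, t_i)$. The base case is immediate: $B(S, t_M) = g(S) = (K-S)^+$ is convex in $S$. For the inductive step, suppose $B(\cdot, t_i)$ is convex, and note that it also inherits the a priori bound $0 \le B(\cdot, t_i) \le K$ from the recursion, since the put pay-off is bounded by $K$. The recursion then gives
\begin{equation*}
B(S, t_{i-1}) \;=\; \max\bigl\{C(S),\, (K-S)^+\bigr\},
\end{equation*}
where $C(S) := \mathbb{E}^{\mathbb{Q}}_{S, t_{i-1}}\bigl[e^{-r(t_i - t_{i-1})} B(S_{t_i}, t_i)\bigr]$ is precisely the price at $t_{i-1}$ of a European option with expiry $t_i$ and pay-off $B(\cdot, t_i)$. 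Since the maximum of two convex functions is convex, the induction closes once $C(S)$ is shown convex in $S$.

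For the convexity of $C(S)$ I would revisit the proof of Lemma \ref{lec1}. Although the closed-form representation \eqref{VF} is stated for the specific pay-off $\Pi(x) = (K - e^x)^+$, the argument only used convexity of the map $S \mapsto \Pi(\ln S)$: after the change of variables $x = \ln S$, the integrand becomes $\Pi(C(m) S)$ with $C(m) = e^{-(1-\gamma)\tau - \tau^{1/\alpha} m} > 0$, which is a convex function of $S$ for each fixed $m$, and integration against the non-negative density $f_{\alpha, 0}(|m|)$ preserves this convexity. Applying this observation with $\Pi$ replaced by the convex bounded function $B(\cdot, t_i)$ and $\tau$ replaced by $\nu(t_i - t_{i-1})$ yields exactly the Fox-function representation of $C(S)$ and hence its convexity in $S$.

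The main technical obstacle is justifying that the Fourier/Fox-function derivation of \eqref{VF} in \cite{CXZ1} remains valid when the terminal data is a general bounded convex function rather than the specific put pay-off. This is essentially an integrability check, and the uniform bound $0 \le B(\cdot, t_i) \le K$ propagated by the induction makes it routine, but it is the one place where the argument genuinely extends Lemma \ref{lec1} rather than merely quoting it, and should be flagged explicitly rather than dismissed as immediate.
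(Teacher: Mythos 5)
Your proof follows essentially the same route as the paper's: backward induction through the dynamic programming recursion, convexity of the European pricing step, and the fact that a maximum (or supremum) of convex functions is convex. The one difference is that you correctly flag a point the paper's own proof glosses over: Lemma~\ref{lec1} as stated only covers the specific put pay-off $(K-S)^+$, whereas the inductive step needs convexity of the European price with the general convex, bounded contract function $B(\cdot,t_i)$, and your observation that the representation \eqref{VF} depends on $S$ only through the pay-off evaluated at the multiplicatively shifted underlying $C(m)S$ (with $C(m)>0$ independent of $S$) is exactly what is needed to close that gap.
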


\begin{proof}
It is well known in convex analysis that the supremum of a family of convex functions is still convex. 
Under the FMLS model, the Bermudan option price $B(S,t)$ in \eqref{Bermudan option} is inductively constructed by the maximum of European option prices $V_E$. 
From Lemma \ref{lec1} we know that $V_E$ is convex in $S$. 
Hence for any fixed $t$ taking the supremum it is readily to see that $B(S,t)$ is convex with respect to $S$, as inherited from $V_E$. 
\end{proof}
	
We are now in a position to prove the first main result in this paper.
\begin{theorem}\label{convex}
The American option price $V(S,t)$ given by \eqref{price} with the pay-off function $g(S)=(K-S)^+$ is convex in the underlying $S$ for each $t$.
\end{theorem}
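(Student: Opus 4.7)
The plan is to chain together the machinery built up in Sections~\ref{sec3} and~\ref{sec4}: convexity of European option prices (Lemma~\ref{lec1}) propagates to Bermudan option prices by induction (Lemma~\ref{lec2}), and the Bermudan-to-American approximation (Lemma~\ref{lec3}) lets me transfer convexity to the American price in the limit. The underlying soft fact I rely on is that the pointwise limit of a sequence of convex functions is convex, so there is no need to re-derive any fractional analytic estimates here.

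First I would reduce to the case $t=0$. By time-homogeneity of the driving Lévy process $L^{\alpha,-1}_t$ in \eqref{FMLS SDE} and the Markov property, the value function satisfies
\begin{equation*}
V(S,t) = \sup_{\gamma \in [0,T-t]} \E\bigl[e^{-r\gamma} g(S^{S}_\gamma)\bigr],
\end{equation*}
where $S^S$ denotes the price process started at $S$. Thus convexity in $S$ at an arbitrary $t$ is the same statement as convexity at $t=0$ for the shortened horizon $T-t$, and it suffices to prove the latter for arbitrary maturity.

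Next, fix underlying values $S_1,S_2>0$ and $\lambda\in[0,1]$, and set $S_\lambda:=\lambda S_1+(1-\lambda)S_2$. Apply Lemma~\ref{lec2} to the dyadic Bermudan price $B_N$ associated with the grid $A_N$: for every $N$,
\begin{equation*}
B_N(S_\lambda,0) \le \lambda\, B_N(S_1,0) + (1-\lambda)\, B_N(S_2,0).
\end{equation*}
Since the put payoff $g(S)=(K-S)^+$ is nonnegative, Lemma~\ref{lec3} applies at each of the three points $S_\lambda,S_1,S_2$, yielding $B_N(\cdot,0)\to V(\cdot,0)$ pointwise. Passing $N\to\infty$ on both sides of the convexity inequality preserves the direction of the inequality and produces
\begin{equation*}
V(S_\lambda,0) \le \lambda\, V(S_1,0) + (1-\lambda)\, V(S_2,0),
\end{equation*}
which is the required convexity.

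The only mildly delicate point, and what I expect to be the main obstacle, is bookkeeping around the change of variable $x=\ln S$: Lemma~\ref{lec1} is established via the representation \eqref{VF}, in which convexity appears naturally in $S$ through the structure $(K-C(m)S)^+$ rather than in $x$, and Lemma~\ref{lec2} is then phrased in $S$. I would make sure that every instance of convexity used in the chain above is convexity in the underlying $S$ (not in the log-price $x$, where the corresponding statement would fail for the put payoff). Once this consistency is verified, the theorem follows from the three-step limit argument above with no further analytic input.
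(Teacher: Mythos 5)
Your proposal is correct and follows essentially the same route as the paper's own proof: reduce to $t=0$, invoke Lemma~\ref{lec2} for convexity of each Bermudan price $B_N(\cdot,0)$, invoke Lemma~\ref{lec3} for pointwise convergence $B_N(\cdot,0)\to V(\cdot,0)$, and pass the convexity inequality to the limit. The extra care you take to distinguish convexity in $S$ from convexity in $x=\ln S$ is a sensible consistency check but does not change the argument.
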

	
\begin{proof}	
To prove the convexity of $V=V(S,t)$ with respect to $S$, one can fix a time $t$ with $t\leq T$. 
Without loss of generality we may regard $t=0$, otherwise make a change $\tau \to \tau-t$. 
From Lemma \ref{lec3} we know the approximating sequence $\{B_N(S,0)\}$ converges to $V(S,0)$ as $N\to\infty$. 
And from Lemma \ref{lec2}, for each $N$ the Bermudan option price $B_N(S,0)$ is convex in $S$.
Since the point-wise limit of a convergent sequence of convex functions is again convex, we have the limit $V(S,0)$ is convex in $S$.
Therefore, the theorem is proved. 
\end{proof}

\section{The impact of fat tails on option prices}\label{sec5}

One of the major advantages of the FMLS model over the standard BS model is that it captures the fat tail feature (leptokurtosis) observed in real world markets \cite{CW}. 
Indeed, compared to the Gaussian density of the underlying prices under the BS model, the L\'evy density increases the probability of the stock price exhibiting large moments or jumps over small time steps, and has fatter tails at both ends than the lognormal distribution of the BS model.

How the fat tails impact on European option prices under the FMLS model has been illustrated in \cite{CXZ1} by their numerical experiments, but without rigorous proof.
They observed that once the tail index $\alpha$ increases up to $2$, the option prices are gradually decreasing to the BS price. 
In other words, the BS formula tends to underprice European puts with underlying following a L\'evy process. 
Moreover, the pricing bias of the BS formula gets larger as $\alpha$ becomes smaller.
They also gave a plausible explanation from a financial point of view for this phenomenon. 

In the following we give a rigorous analytical proof for the above observation  for European options and also for American options.

\begin{proposition}
As $\alpha$ approaches to $2$, the price of a European put $V_{E}$ computed in \eqref{VF} decreasingly converges to its price determined under the standard BS model in a sufficiently large underlying region $x\geq x_0$. 
\end{proposition}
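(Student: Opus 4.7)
The proposition contains two claims: (i) pointwise convergence $V_E(x,\tau;\alpha)\to V_E^{\mathrm{BS}}(x,\tau)$ as $\alpha\uparrow 2$, and (ii) the convergence is monotone (decreasing in $\alpha$) on the half-line $x\geq x_0$. My plan is to attack both directly from the integral representation \eqref{VF}.

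For part (i), I would apply the dominated convergence theorem inside \eqref{VF}. Standard results on L\'evy stable distributions---combining pointwise convergence of characteristic functions with L\'evy's continuity theorem---give $f_{\alpha,0}(|m|)\to (4\pi)^{-1/2}e^{-m^{2}/4}$ as $\alpha\uparrow 2$, and the auxiliary quantities $\gamma=r/\nu(\alpha)$, $(1-\gamma)\tau$ and $\tau^{1/\alpha}$ are continuous at $\alpha=2$. Since the payoff $\Pi(y)=(K-e^{y})^{+}$ is bounded by $K$, the integrand is majorized by $K\cdot\sup_{\alpha\in[\alpha_{0},2]}f_{\alpha,0}(|m|)$; using uniform $L^{1}$ bounds for stable densities on a compact $\alpha$-interval near $2$, we may pass to the limit under the integral sign, obtaining the Gaussian integral that defines the BS put.

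For part (ii), I would exploit the well-known tail asymptotics $f_{\alpha,0}(|m|)\sim C(\alpha)\,|m|^{-1-\alpha}$ as $|m|\to\infty$, with $C(\alpha)>0$ continuous. For $x$ large, $\Pi\bigl(x-(1-\gamma)\tau-\tau^{1/\alpha}m\bigr)$ is supported on $m\geq M(x):=\tau^{-1/\alpha}(x-(1-\gamma)\tau-\ln K)$, and $M(x)\to\infty$ with $x$. On this range the density is well approximated by its power-law tail, and integrating yields a leading asymptotic of the form $V_E(x,\tau;\alpha)\sim A(\alpha,\tau)\,x^{-\alpha}$ as $x\to\infty$, with $A(\alpha,\tau)>0$ continuous in $\alpha$. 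The dominant factor $x^{-\alpha}$ is strictly decreasing in $\alpha$ for $x>1$, and $\partial_{\alpha} x^{-\alpha}=-(\ln x)\,x^{-\alpha}$ has arbitrarily large magnitude as $x\to\infty$, so for $x_{0}$ sufficiently large it dominates all other $\alpha$-derivatives, giving $\partial_{\alpha} V_E(x,\tau;\alpha)<0$ on $[\alpha_{0},2]\times[x_{0},\infty)$, which together with part (i) yields the decreasing convergence.

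The main obstacle is controlling this monotonicity rigorously. The density $f_{\alpha,0}$ is a Fox $H$-function with no elementary closed form, so $\partial_{\alpha}f_{\alpha,0}$ is awkward to handle pointwise, and $\alpha$ enters the integral simultaneously through the density, the scaling $\tau^{1/\alpha}$, and the parameters $\nu$ and $\gamma$, all of which must be tracked together and shown to be dominated by the leading tail contribution. The restriction $x\geq x_{0}$ is essential: on deep out-of-the-money puts the polynomial tail of the stable density dominates the integral and the tail-comparison argument succeeds, whereas in the at- or in-the-money regime the central portion of the density contributes substantially and pointwise monotonicity in $\alpha$ cannot be expected in general.
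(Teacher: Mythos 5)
Your part (i) is fine and, if anything, more explicit than the paper, which never separately establishes convergence but obtains it as a by-product of monotonicity; dominated convergence together with the weak convergence of stable densities to the Gaussian is a clean way to get the limit. The genuine gap is in part (ii). Your monotonicity argument rests on the expansion $V_E(x,\tau;\alpha)\sim A(\alpha,\tau)\,x^{-\alpha}$ as $x\to\infty$ and on differentiating it in $\alpha$. Two things go wrong. First, an asymptotic expansion in $x$ cannot in general be differentiated with respect to a parameter: you would need uniform control of $\partial_\alpha$ of the remainder, which is exactly the hard part and is not supplied. Second, and more fundamentally, the tail constant $C(\alpha)$ of the stable density vanishes as $\alpha\uparrow 2$ (the Gaussian has no power-law tail), so the scale beyond which $f_{\alpha,0}(m)\approx C(\alpha)m^{-1-\alpha}$ is a good approximation diverges as $\alpha\uparrow 2$. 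Hence for any \emph{fixed} $x_0$, once $\alpha$ is close enough to $2$ the integral over $m\geq M(x_0)$ is governed by the Gaussian-like bulk of the density rather than by its power-law tail, and the approximation $V_E\approx A(\alpha,\tau)x^{-\alpha}$ with $A$ ``positive and continuous'' degenerates (indeed $A(2,\tau)=0$). The proposition concerns the limit $\alpha\uparrow 2$ at fixed large $x$; your analysis is carried out in the opposite order of limits ($x\to\infty$ at fixed $\alpha<2$), and the uniformity needed to exchange them is precisely the content of the claim. A smaller related slip: $\partial_\alpha x^{-\alpha}=-(\ln x)x^{-\alpha}$ does not have large magnitude as $x\to\infty$ (it tends to $0$); what you mean is that the logarithmic derivative is $-\ln x$, but that only dominates $\partial_\alpha \ln A$ if the latter stays bounded, which it does not near $\alpha=2$ where $A\to 0$.

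For contrast, the paper does not pass through tail asymptotics of the price at all: it differentiates the explicit closed-form representation of $V_E$ term by term in $\alpha$, establishing the signs $\partial_\alpha\nu<0$, $\partial_\alpha d_1>0$ and (for large argument) $\partial_\alpha f_{\alpha,0}<0$, and then groups $\partial_\alpha V_E$ into three pieces, each argued to be nonpositive. If you wish to salvage your route, you would need to quantify the error in the stable-tail approximation uniformly in $\alpha$ on $[\alpha_0,2]$, which the degeneration of $C(\alpha)$ at $\alpha=2$ makes delicate; working directly with $\partial_\alpha$ of the integrand, as the paper does, avoids that interchange-of-limits issue.
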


\begin{proof}
Recall the analytical expression for the European put option price in \eqref{VF}, or equivalently 
	\begin{equation}\label{FMLS VFp}
		V_{E}(x,\tau;\alpha) = Ke^{-\gamma\tau}\int_{d_1}^\infty f_{\alpha,0}(|m|)dm - e^x\int_{d_1}^\infty e^{-\tau-\tau^{\frac{1}{\alpha}m}}f_{\alpha,0}(|m|)dm,
	\end{equation}
where $d_1=\frac{x-\ln K - (1-\gamma)\tau}{\tau^{\frac{1}{\alpha}}}$. 
It suffices to show that $\frac{\partial V_{E}}{\partial \alpha}<0$, as $\alpha\to2$ and $x\geq x_0$ sufficiently large. 
Since \eqref{FMLS VFp} is quite involved, we need to compute derivatives for each function of $\alpha$ separately. 

Adopting the notations from \cite{CXZ1},
	\begin{equation*}
		\nu=-\frac{1}{2}\sigma^\alpha\sec\frac{\alpha\pi}{2}
	\end{equation*}
is the convexity adjustment. By differentiating, 
	\begin{equation*}
		\frac{\partial\nu}{\partial\alpha}=\left(-\frac12\sec\frac{\alpha\pi}{2}\right)\frac{\partial\sigma^\alpha}{\partial\alpha} - \frac12\sigma^\alpha\frac{\partial\sec\frac{\alpha\pi}{2}}{\partial\alpha},
	\end{equation*}
and 
	\begin{equation*}
		\frac{\partial\sigma^\alpha}{\partial\alpha} = \frac{\partial e^{\alpha\ln\sigma}}{\partial\alpha} 
			= \sigma^\alpha\left(\ln\sigma + \frac{\alpha}{\sigma}\frac{\partial\sigma}{\partial\alpha} \right).
	\end{equation*}
From the normalisation of volatilities to $\sigma_{BS}=0.25$, one can see that $\partial\sigma/\partial\alpha\to0$ and $\ln\sigma\sim-1.38$, as $\alpha\to2$. Hence, one has $\partial\sigma^\alpha/\partial\alpha<0$, and furthermore as $\alpha\to2$, $\sec\frac{\alpha\pi}{2}\to-1$,
	\begin{equation}\label{partial1}
		\frac{\partial\nu}{\partial\alpha} < -\frac{\pi}{4}\sigma^\alpha\left(\frac{\sin\frac{\alpha\pi}{2}}{\cos^2\frac{\alpha\pi}{2}}\right)<0.
	\end{equation}

The relative interest rate $\gamma$ and the backward time $\tau$ are defined by 
	\begin{equation*}
		\gamma=\frac{-2r}{\sigma^\alpha\sec(\alpha\pi/2)}=\frac{r}{\nu},
	\end{equation*} 
and 
	\begin{equation*}
		\tau=-\frac12\sigma^\alpha(\sec\frac{\alpha\pi}{2})(T-t)=\nu(T-t),
	\end{equation*} 
respectively, where $r$ is the risk free interest.
It is readily to see that $\gamma\tau=r(T-t)$ is independent of $\alpha$.
	
The L\'evy density satisfies the inverse power-law asymptotically at large underlying values, thus for sufficiently large $x$,
	\begin{equation}\label{partial2}
		f_{\alpha,0}(x) \sim \frac{1}{|x|^{1+\alpha}},\quad\mbox{ and }\quad \frac{\partial f_{\alpha,0}}{\partial \alpha} <0.
	\end{equation}

Next we compute the partial derivative of $d_1$ in $\alpha$. Note that $\gamma\tau$ is independent of $\alpha$, we have
	\begin{equation*}
	\begin{split}
		\frac{\partial d_1}{\partial \alpha} &= (x-\ln K-(1-\gamma)\tau)\left[\frac{1}{\tau^{\frac{1}{\alpha}}}\left(-\frac{\partial\tau}{\partial\alpha}\right) + \frac{\partial \tau^{-\frac{1}{\alpha}}}{\partial\alpha} \right]\\
		&=:(x-\ln K-(1-\gamma)\tau)(I_1+I_2),
	\end{split}
	\end{equation*}
where the coefficient $(x-\ln K-(1-\gamma)\tau)>0$ when $x$ is sufficiently large. From \eqref{partial1},
	\begin{equation*}
		I_1=\frac{1}{\tau^{\frac{1}{\alpha}}}\left(-\frac{\partial\tau}{\partial\alpha}\right)=\tau^{-\frac{1}{\alpha}}(T-t)\left(-\frac{\partial\nu}{\partial\alpha}\right)>0;
	\end{equation*}
and noting that as $\tau\to0$, $\tau\ln\tau\to0$, which means the speed of $\frac{1}{\tau}$ goes to infinity is much faster than that of $|\ln\tau|$, thus 
	\begin{equation*}
		I_2=\frac{\partial \tau^{-\frac{1}{\alpha}}}{\partial\alpha} = \tau^{-\frac{1}{\alpha}}\left(\frac{\ln\tau}{\alpha^2}+\frac{1}{\alpha\tau}(-\frac{\partial\tau}{\partial\alpha})\right)>0.
	\end{equation*}
Therefore, we obtain
	\begin{equation}\label{partial3}
		\frac{\partial d_1}{\partial\alpha}>0.
	\end{equation}
	
Now we can compute partial derivatives of the integrals in \eqref{FMLS VFp}. 
	\begin{equation*}
	\begin{split}
		\frac{\partial V_{E}}{\partial\alpha} =& Ke^{-\gamma\tau}\left[\int_{d_1}^\infty\frac{\partial f_{\alpha,0}}{\partial\alpha}dm-f_{\alpha,0}(d_1)\frac{\partial d_1}{\partial\alpha}\right] + e^{x-\tau-\tau^{\frac{1}{\alpha}}d_1}f_{\alpha,0}(d_1)\frac{\partial d_1}{\partial\alpha}\\
		&-e^x\left[\int_{d_1}^\infty e^{-\tau-\tau^{\frac{1}{\alpha}}m}\left(\frac{\partial f_{\alpha,0}}{\partial\alpha}+f_{\alpha,0}(-\frac{\partial\tau}{\partial\alpha}-\frac{\partial\tau^{\frac{1}{\alpha}}}{\partial\alpha}m)\right)dm \right] \\
		=& \left[Ke^{-\gamma\tau}\int_{d_1}^\infty\frac{\partial f_{\alpha,0}}{\partial\alpha}dm-e^x\int_{d_1}^\infty e^{-\tau-\tau^{\frac{1}{\alpha}}m}\frac{\partial f_{\alpha,0}}{\partial\alpha}dm \right] \\
		& + \left[f_{\alpha,0}(d_1)\frac{\partial d_1}{\partial\alpha}\left(e^{x-\tau-\tau^{\frac{1}{\alpha}}d_1}-Ke^{-\gamma\tau}\right)\right] \\
		& -e^x\int_{d_1}^\infty f_{\alpha,0}e^{-\tau-\tau^{\frac{1}{\alpha}}m}\left(-\frac{\partial\tau}{\partial\alpha}-\frac{\partial\tau^{\frac{1}{\alpha}}}{\partial\alpha}m\right)dm \\
		=:& Q_1+ Q_2 + Q_3.
	\end{split}
	\end{equation*}
Observing that $K\geq e^x$, and $\gamma\tau=r(T-t)\to0$ as $t\to T$. Hence, from \eqref{partial1}--\eqref{partial3}, we have $Q_1<0$ and $Q_2<0$. 
Last, since $\frac{\partial\tau}{\partial\alpha}<0$ and $\frac{\partial\tau^{\frac{1}{\alpha}}}{\partial\alpha}<0$, it is obvious that $Q_3<0$ as well. 
Finally, we prove that $\frac{\partial V_{E}}{\partial\alpha}<0$, namely as $\alpha\to2$, $V_{E}$ decreasingly approaches to the European put option price under the standard Black-Scholes model for large underlying $x$, and thus verify the monotone property as observed in the numerical examples in \cite{CXZ1}.
\end{proof}

With the help of the above result and Bermudan options we can prove the second main theorem in this paper as follows.

\begin{theorem}\label{mt1}
For sufficiently large underlying the price of an American put determined under the FMLS model decreasingly converges to its Black--Scholes price as $\alpha\uparrow 2$.
\end{theorem}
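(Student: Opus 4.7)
The plan is to imitate the three-step strategy used for Theorem~\ref{convex}, but propagating monotonicity in the tail index $\alpha$ instead of convexity in $S$. The three ingredients are: the monotonicity of European put prices in $\alpha$ on the region $x\geq x_0$, which is exactly the preceding Proposition; the propagation of this monotonicity through the Bermudan recursion; and the passage to the American limit via Lemma~\ref{lec3}.

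First, I would fix $N$ and carry out a backward induction over the exercise times $t_0<t_1<\cdots<t_M=T$ of the Bermudan option, showing that $B_N(S,t_i;\alpha)$ is decreasing in $\alpha$ for $S$ in a sufficiently large region. The base step $i=M$ is immediate since $B_N(S,T;\alpha)=(K-S)^+$ has no $\alpha$-dependence. For the inductive step at $t_{i-1}$, the dynamic programming principle recalled in Section~\ref{sec3} gives
\begin{equation*}
B_N(S,t_{i-1};\alpha)=\max\Bigl\{g(S),\ \E_{S,t_{i-1}}\bigl[e^{-r(t_i-t_{i-1})}B_N(S_{t_i},t_i;\alpha)\bigr]\Bigr\};
\end{equation*}
since $g(S)$ is $\alpha$-independent and the pointwise maximum preserves monotonicity in $\alpha$, the induction reduces to proving that the European price over $[t_{i-1},t_i]$ with terminal contract function $B_N(\cdot,t_i;\alpha)$ is decreasing in $\alpha$ on a region $S\geq S_i$.

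To handle this, I would prove an extension of the preceding Proposition: if $\phi(S;\alpha)$ is non-negative, convex and decreasing in $S$, dominated by $(K-S)^+$, and decreasing in $\alpha$ on $S\geq S_0$, then the European price obtained by substituting $\phi$ for $\Pi$ in the integral representation~\eqref{VF} is also decreasing in $\alpha$ on a region $S\geq S_0'$. The required structural hypotheses on $\phi$ are furnished at each recursion step by Lemma~\ref{lec2} (convexity and monotonicity in $S$) and by the inductive hypothesis (monotonicity in $\alpha$). The cleanest route to the extension is the Breeden--Litzenberger-type representation
\begin{equation*}
\phi(S)=\int_0^\infty (K'-S)^+\,\mu_\phi(dK'),
\end{equation*}
valid for a non-negative convex decreasing $\phi$ with $\phi(\infty)=0$; integrating the pointwise estimate $\partial V_E/\partial\alpha<0$ obtained in the Proposition for the put with strike $K'$ against the measure $\mu_\phi$ then delivers the desired inequality by linearity of the European pricing operator in the contract function.

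Having established that $\alpha\mapsto B_N(S,0;\alpha)$ is decreasing for every $N$, Lemma~\ref{lec3} gives $B_N(S,0;\alpha)\to V(S,0;\alpha)$ pointwise, so $V(S,0;\alpha)$ inherits the monotonicity in $\alpha$. Convergence to the Black--Scholes American put price as $\alpha\uparrow 2$ then follows because the FMLS model coincides with the classical Black--Scholes model at $\alpha=2$ and each European substep depends continuously on $\alpha$ there; the monotone limit of $V(S,0;\alpha)$ as $\alpha\uparrow 2$ must therefore equal $V(S,0;2)$, which is precisely the Black--Scholes American put. The hardest step will be the extended monotonicity lemma for European prices: the Proposition's proof leans heavily on the explicit two-term splitting of the put payoff into $Ke^{-\gamma\tau}$ and $e^x$ contributions, and a generic $\phi$ lacks that structure; the Breeden--Litzenberger decomposition circumvents this, but making it rigorous requires verifying that the ``sufficiently large underlying'' threshold $S_0$ appearing in the Proposition can be chosen uniformly in $K'$ so that the thresholds $S_i$ do not deteriorate as the backward induction is iterated through the $M=2^N$ Bermudan steps.
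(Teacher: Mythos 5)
Your proposal follows the same overall route as the paper's proof --- approximate the American put by the Bermudan prices $B_N$, propagate the monotonicity in $\alpha$ through the dynamic-programming recursion using the European result of the preceding Proposition, and pass to the limit $N\to\infty$ via Lemma~\ref{lec3} --- but you supply a key intermediate step that the paper omits. The paper's proof simply asserts that ``since $V_E(S,t;\alpha)\to V_{BS}(S,t)$ decreasingly for large $S$, $B_N(S,t;\alpha)\to B_N(S,t;2)$ decreasingly,'' which tacitly applies the Proposition (proved only for the specific put payoff $\Pi(x)=(K-e^x)^+$) to the $\alpha$-dependent contract functions $B_N(\cdot,t_m;\alpha)$ arising at each stage of the recursion. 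You correctly identify this as the crux and propose to bridge it with a Breeden--Litzenberger decomposition of a non-negative convex decreasing contract function into a mixture of put payoffs, combined with the monotonicity of the pricing operator in the contract function to absorb the $\alpha$-dependence of $B_N(\cdot,t_m;\alpha)$ itself; this is a genuinely additional lemma and is the right way to make the inductive step precise. You are also more candid than the paper about the remaining difficulty: the Proposition only gives $\partial V_E/\partial\alpha<0$ on a region $S\geq S_0(K')$ depending on the strike, while the European pricing integral and the mixture over $K'$ see the whole domain, so one must check that the thresholds can be chosen uniformly and do not degrade over the $2^N$ backward steps (and, at the end, that the limits in $N$ and in $\alpha$ can be interchanged). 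Neither your argument nor the paper's fully closes that gap, but your version isolates it explicitly rather than passing over it, and is otherwise a faithful and somewhat more rigorous rendering of the paper's strategy.
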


\begin{proof}
The proof is via an approximation to the American option price by a sequence of Bermudan option price \eqref{Bermudan sequence}. 
By the inductive definition of Bermudan option, the price $B_N(S,t_{m-1};\alpha)$ is the maximum of $g(S)$ the the price of a European option $V_{E}(S,t;\alpha)$ with expiry $t_m$ and contract function 
$B_N(S,t_m;\alpha)$. And by the proof of Lemma \ref{lec3}, $B_N(S,0;\alpha)\to V(S,0;\alpha)$ as $N\to\infty$.

Since $V_{E}(S,t;\alpha)\to V_{BS}(S,t)$ decreasingly as $\alpha\to2$ for large $S$, $B_N(S,t;\alpha)\to B_N(S,t;2)$ decreasingly for large $S$, where $B_N(S,t;2)$ is the Bermudan option corresponding to the standard BS model. 
Then we take $N\to\infty$ to make the Bermudan option denser, and $t\to T$ to obtain $V(S,0;\alpha)\to V(S,0;2)$ decreasingly as $\alpha\to2$, for large $S$.
Note that $V(S,t;2)=V(S,t)$ is the American put option price under the standard BS model. Therefore, the theorem is proved.
\end{proof}

\section{Conclusion}\label{sec7}

This paper investigates analytic properties of American put options under the FMLS model.
A major advantage of the FMLS model over the classical Black--Scholes model is that it captures the fat tail feature (leptokurtosis) commonly observed in real world markets.
Using the closed-form analytical solutions for European options obtained in~\cite{CXZ1} and approximating an American option by a sequence of Bermudan options, 
we prove that the American put price under the FMLS model is convex with respect the underlying price, and decreasingly converges to its Black--Scholes price as the tail index $\alpha$ tends to $2$. 
The first result coincides with the corresponding property under the BS framework,
while the second result reveals that the BS model tends to underprice put options, and the pricing bias becomes larger as the tail of the return distribution becomes fatter.

\bibliographystyle{elsarticle-harv}
\bibliography{ref}

\begin{thebibliography}{20}
\expandafter\ifx\csname natexlab\endcsname\relax\def\natexlab#1{#1}\fi
\expandafter\ifx\csname url\endcsname\relax
  \def\url#1{\texttt{#1}}\fi
\expandafter\ifx\csname urlprefix\endcsname\relax\def\urlprefix{URL }\fi

\bibitem[{Black and Scholes(1973)}]{black1973pricing}
Black, F., Scholes, M., 1973. {The pricing of options and corporate
  liabilities}. The journal of political economy, 637--654.

\bibitem[{Carr and Wu(2003)}]{CW}
Carr, P., Wu, L., 2003. {The finite moment log stable process and option
  pricing}. The journal of finance 58~(2), 753--778.

\bibitem[{Cartea and del Castillo-Negrete(2007)}]{CCN}
Cartea, A., del Castillo-Negrete, D., 2007. {Fractional diffusion models of
  option prices in markets with jumps}. Physica A: Statistical Mechanics and
  its Applications 374~(2), 749--763.

\bibitem[{Chen and Wang(2014)}]{chen2014penalty}
Chen, W., Wang, S., 2014. {A penalty method for a fractional order parabolic
  variational inequality governing American put option valuation}. Computers \&
  Mathematics with Applications 67~(1), 77--90.

\bibitem[{Chen and Wang(2015)}]{chen2015finite}
Chen, W., Wang, S., 2015. {A finite difference method for pricing European and
  American options under a geometric Levy process}. J. Ind. Manag. Optim 11,
  241--264.

\bibitem[{Chen et~al.(2014)Chen, Xu, and Zhu}]{CXZ1}
Chen, W.-T., Xu, X., Zhu, S.-P., 2014. {Analytically pricing European-style
  options under the modified Black-Scholes equation with a spatial-fractional
  derivative}. Quarterly of Applied Mathematics 72~(3), 597--611.

\bibitem[{Chen et~al.(2015)Chen, Xu, and Zhu}]{CXZ}
Chen, W.-T., Xu, X., Zhu, S.-P., 2015. {A predictor--corrector approach for
  pricing American options under the finite moment log-stable model}. Applied
  Numerical Mathematics 97, 15--29.

\bibitem[{Chen et~al.(2008)Chen, Chadam, Jiang, and Zheng}]{CCJZ}
Chen, X., Chadam, J., Jiang, L., Zheng, W., 2008. {Convexity of the exercise
  boundary of the American put option on a zero dividend asset}. Mathematical
  Finance 18~(1), 185--197.

\bibitem[{Ekstr{\"o}m(2004{\natexlab{a}})}]{EE2}
Ekstr{\"o}m, E., 2004{\natexlab{a}}. {Convexity of the optimal stopping
  boundary for the American put option}. Journal of mathematical analysis and
  applications 299~(1), 147--156.

\bibitem[{Ekstr{\"o}m(2004{\natexlab{b}})}]{EE1}
Ekstr{\"o}m, E., 2004{\natexlab{b}}. {Properties of American option prices}.
  Stochastic Processes and their Applications 114~(2), 265--278.

\bibitem[{El~Karoui et~al.(1998)El~Karoui, Jeanblanc-Picqu{\`e}, and
  Shreve}]{El}
El~Karoui, N., Jeanblanc-Picqu{\`e}, M., Shreve, S., 1998. {Robustness of the
  Black and Scholes formula}. Mathematical finance 8~(2), 93--126.

\bibitem[{Hobson(1998)}]{Hob}
Hobson, D., 1998. {Volatility misspecification, option pricing and
  superreplication via coupling}. Annals of Applied Probability, 193--205.

\bibitem[{Huang et~al.(1996)Huang, Subrahmanyam, and Yu}]{huang1996pricing}
Huang, J., Subrahmanyam, M., Yu, G., 1996. {Pricing and hedging American
  options: a recursive integration method}. Review of Financial Studies 9~(1),
  277--300.

\bibitem[{Ju(1998)}]{ju1998pricing}
Ju, N., 1998. {Pricing by American option by approximating its early exercise
  boundary as a multipiece exponential function}. Review of Financial Studies
  11~(3), 627--646.

\bibitem[{Longstaff and Schwartz(2001)}]{longstaff2001valuing}
Longstaff, F., Schwartz, E., 2001. {Valuing American options by simulation: a
  simple least-squares approach}. Review of Financial studies 14~(1), 113--147.

\bibitem[{Merton(1992)}]{Mer}
Merton, R., 1992. {Continuous-Time Finance}. Blackwell Boston.

\bibitem[{Press(1967)}]{press1967compound}
Press, S.~J., 1967. {A compound events model for security prices}. Journal of
  business, 317--335.

\bibitem[{Schmelzle(2010)}]{schmelzle2010option}
Schmelzle, M., 2010. {Option pricing formulae using Fourier transform: Theory
  and application}. Preprint, http://pfadintegral. com.

\bibitem[{Schreve(2004)}]{schreve2004stochastic}
Schreve, S., 2004. {Stochastic Calculus for Finance II}. Springer.

\bibitem[{Zhu(2006)}]{Zhu}
Zhu, S.-P., 2006. {An exact and explicit solution for the valuation of American
  put options}. Quantitative Finance 6~(3), 229--242.

\end{thebibliography}

\end{document}